\newtheorem{theorem}{Theorem}[section]
\newtheorem{lemma}[theorem]{Lemma}
\newtheorem*{theorem*}{Theorem}
\newtheorem*{conjecture*}{Conjecture}
\newtheorem*{corollary*}{Corollary}
\newtheorem*{proposition*}{Proposition}
\theoremstyle{definition}
\newenvironment{acknowledgement}[1][Acknowledgement]{\begin{trivlist}
\item[\hskip \labelsep {\bfseries #1}]}{\end{trivlist}}
\theoremstyle{remark}
\begin{document}

\title{On interval edge-colorings of planar graphs}

\author{Arsen Hambardzumyan, Levon Muradyan}

\author{
{\sl Arsen Hambardzumyan}\thanks{{\it E-mail address:} 
hambardzumyanarsen99@gmail.com}\\ 
Department of Informatics \\ and Applied Mathematics,\\
Yerevan State University \\ 0025, Armenia
\and
{\sl Levon Muradyan}\thanks{{\it E-mail address:} 
levonmuradyanlevon@gmail.com}\\ 
Department of Informatics \\ and Applied Mathematics,\\
Yerevan State University \\ 0025, Armenia
}

\maketitle


\begin{abstract}
An edge-coloring of a graph $G$ with colors $1,\ldots,t$ is called an \emph{interval $t$-coloring} if all colors are used and the colors of edges incident to each vertex of $G$ are distinct and form an interval of integers.
In 1990, Kamalian proved that if a graph $G$ with at least one edge has an interval $t$-coloring, then $t\leq 2|V(G)|-3$. In 2002, Axenovich improved this upper bound for planar graphs: if a planar graph $G$ admits an interval $t$-coloring, then $t\leq \frac{11}{6}|V(G)|$. In the same paper Axenovich suggested a conjecture that if a planar graph $G$ has an interval $t$-coloring, then $t\leq \frac{3}{2}|V(G)|$. In this paper we confirm the conjecture by showing that if a planar graph $G$ admits an interval $t$-coloring, then $t\leq \frac{3|V(G)|-4}{2}$. We also prove that if an outerplanar graph $G$ has an interval $t$-coloring, then $t\leq |V(G)|-1$. Moreover, all these upper bounds are sharp.        

\end{abstract}
Keywords: Edge-coloring, interval edge-coloring, planar graph, outerplanar graph
	
\bigskip
\section{Introduction}

We use \cite{West} for terminology and notation not defined here. We consider graphs that are finite, undirected, and have no loops or multiple edges. Let $V(G)$ and $E(G)$ denote the sets of vertices and edges of a graph $G$, respectively. The degree of a vertex $v\in V(G)$ is denoted by $d_{G}(v)$, the maximum degree of $G$ by $\Delta(G)$, the diameter of $G$ by $\mathrm{diam}(G)$ and the chromatic index of $G$ by $\chi^{\prime}(G)$. A proper edge-coloring of a graph $G$ is a mapping $\alpha: E(G)\rightarrow \mathbb{N}$ such that $\alpha(e)\not=\alpha(e')$ for every pair of adjacent edges $e$ and $e'$ in $G$. 

An \emph{interval $t$-coloring} of a graph $G$ is a proper edge-coloring $\alpha$ of $G$ with colors $1,\ldots,t$ such that all colors are used and for each $v\in V(G)$, the set of colors of the edges incident to $v$ is an interval of integers. A graph $G$ is \emph{interval colorable} if there is an integer $t\geq 1$ for which $G$ has an interval $t$-coloring. The set of all interval colorable graphs is denoted by $\mathfrak{N}$. For a graph $G\in \mathfrak{N}$, the maximum value of $t$ for which $G$ has an interval $t$-coloring is denoted by $W(G)$. The notion of interval colorings was introduced by Asratian and Kamalian \cite{AsrKam} (available in English as \cite{AsrKamJCTB}) in 1987 and was motivated by the problem of finding compact school timetables, that is, timetables such that the lectures of each teacher and each class are scheduled at consecutive periods. This problem corresponds to the problem of finding an interval edge-coloring of a bipartite multigraph. 
In \cite{AsrKam,AsrKamJCTB}, Asratian and Kamalian noted that if $G$ is interval
colorable, then $\chi^{\prime }\left(G\right)=\Delta(G)$. Moreover,
they also showed that it is an $NP$-complete problem to determine whether an $r$-regular ($r\geq 3$) graph has an interval coloring or not. Asratian and Kamalian also proved \cite{AsrKam,AsrKamJCTB} that if a triangle-free graph $G$ has an interval $t$-coloring, then $t\leq \left\vert V(G)\right\vert -1$. In \cite{Kampreprint}, Kamalian investigated interval colorings of complete bipartite graphs and trees. In particular, he proved that the complete bipartite graph
$K_{m,n}$ has an interval $t$-coloring if and only if
$m+n-\gcd(m,n)\leq t\leq m+n-1$, where $\gcd(m,n)$ is the greatest
common divisor of $m$ and $n$. In \cite{Petrosyan,PetrosyanKhachatrianTananyan}, Petrosyan, Khachatrian and Tananyan investigated interval colorings of complete graphs and
$n$-dimensional cubes. In particular, they proved that the $n$-dimensional cube $Q_{n}$ has an interval $t$-coloring if and only if $n\leq t\leq \frac{n\left(n+1\right)}{2}$. Generally, it is an $NP$-complete problem to determine whether a bipartite graph has an interval coloring \cite{Seva}.
In fact, for every positive integer $\Delta\geq 11$, there exists a bipartite graph with maximum degree $\Delta$ that has no interval coloring \cite{PetrosHrant}. However, some classes of graphs have been proved to admit interval colorings; it is known, for example, that trees, regular and complete bipartite graphs \cite{AsrKam,Hansen,Kampreprint}, subcubic graphs with $\chi^{\prime }\left(G\right)=\Delta(G)$ \cite{ArmenCarlPetros}, doubly convex bipartite graphs \cite{AsrDenHag,KamDiss}, grids \cite{GiaroKubale1}, outerplanar bipartite graphs \cite{GiaroKubale2}, $(2,b)$-biregular graphs \cite{Hansen,HansonLotenToft,KamMir} and $(3,6)$-biregular graphs \cite{CarlJToft} have interval colorings, where an \emph{$(a,b)$-biregular} graph is a bipartite graph where the vertices in one part all have degree $a$ and the vertices in the other part all have degree $b$.

General upper bounds on the number of colors in interval edge-colorings of graphs were obtained in \cite{KamDiss,GiaroKubaleMalaf}. In particular, Kamalian \cite{KamDiss} proved that if $G$ is a simple graph with at least one edge and $G\in \mathfrak{N}$, then $W(G)\leq 2|V(G)|-3$. Moreover, this upper bound is sharp for $K_{2}$. In 2001, Giaro, Kubale and Ma\l afiejski \cite{GiaroKubaleMalaf}  slightly improved the upper bound by showing that if $G$ is a simple graph with at least $3$ vertices and $G\in \mathfrak{N}$, then $W(G)\leq 2|V(G)|-4$. On the other hand, in \cite{Petrosyan} it was proved that for any $\varepsilon>0$, there exists a graph $G$ such that $G\in \mathfrak{N}$ and $W(G)\geq (2-\varepsilon)|V(G)|$. In the case of planar graphs general upper bounds on the parameter $W(G)$ were improved by Axenovich \cite{Axen}. In particular, she proved that if $G$ is a planar graph and $G\in \mathfrak{N}$, then $W(G)\leq \frac{11}{6}|V(G)|$, and conjectured that this upper bound can be improved to $W(G)\leq \frac{3}{2}|V(G)|$. In \cite{AsrKamJCTB}, Asratian
and Kamalian proved that if $G$ is connected and $G\in\mathfrak{N}$, then 
$W(G)\leq\left(\mathrm{diam}(G)+1\right)\left(\Delta(G)-1\right) +1$. 
They also proved that if $G$ is connected bipartite and $G\in\mathfrak{N}$, then this bound can be improved to $W(G)\leq \mathrm{diam}(G)\left(\Delta(G)-1\right) +1$. Kamalian and Petrosyan \cite{KamalianPetrosyan} showed that these upper bounds cannot be significantly improved. Recently, Casselgren, Khachatrian and Petrosyan \cite{CarlHrantPetros} derived upper bounds on the number of colors in interval edge-colorings of multigraphs. In particular, they proved that if $G$ is a connected cubic multigraph and $G\in\mathfrak{N}$, then $W(G)\leq |V(G)|+1$ and this upper bound is sharp.    

In this paper we confirm Axenovich's conjecture on interval edge-colorings of planar graphs. We also prove that if $G$ is an outerplanar graph and $G\in\mathfrak{N}$, then  $W(G)\leq |V(G)|-1$ and this upper bound is sharp.

\section{Main Results}

Before we formulate and prove our main results, we need two simple properties on planar and outerplanar graphs and one new lemma. 

\begin{lemma}\cite{West}\label{LemmaPlanar} A planar graph $G$ of order $n\geq 2$ ($n\geq 3$) has at most $3n-5$ ($3n-6$) edges. 
\end{lemma}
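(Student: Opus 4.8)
The plan is to derive both inequalities from Euler's polyhedral formula $n - m + f = 2$, where $m = |E(G)|$ and $f$ denotes the number of faces of a fixed planar embedding of $G$. Since adding edges to a planar graph while keeping it planar can only increase $m$, I would first reduce to the connected case: any disconnected planar graph can be augmented to a connected one on the same vertex set by repeatedly joining vertices of distinct components that lie on a common face of the embedding, and an upper bound on $m$ for the connected augmentation immediately transfers to the original $G$.

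For the main bound, assume $G$ is connected with $n \geq 3$. The key combinatorial input is a double count of edge--face incidences: the boundary walks of the faces together traverse every edge exactly twice, so $\sum_{F} \ell(F) = 2m$, where $\ell(F)$ is the length of the boundary walk of the face $F$. Because $G$ is simple and is not a single edge, every face boundary is a closed walk of length at least $3$ (a boundary of length $2$ would force multiple edges, and one of length $1$ a loop, both excluded). Hence $3f \leq 2m$. Substituting $f = 2 - n + m$ gives $3(2 - n + m) \leq 2m$, which rearranges to $m \leq 3n - 6$, the stronger bound valid for $n \geq 3$.

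It then remains to dispose of the degenerate and weaker cases. For $n = 2$ the graph has at most one edge, and $3\cdot 2 - 5 = 1$, so $m \leq 3n - 5$ holds directly; this is exactly the configuration the face-counting argument must exclude, since the unique face of $K_2$ has boundary length $2$. For every $n \geq 3$ we have $3n - 6 \leq 3n - 5$, so the inequality just proved already yields $m \leq 3n - 5$ as well, and combining the two ranges establishes both stated bounds.

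The step I expect to require the most care is the face-length estimate $\ell(F) \geq 3$: one must verify that no face has boundary length $1$ or $2$, which is where the hypotheses that $G$ is simple (no loops or multiple edges) and that $n \geq 3$ are genuinely used, and which is precisely what forces the weaker constant $3n - 5$ in the boundary case $n = 2$.
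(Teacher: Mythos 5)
Your proof is correct and is essentially the argument the paper relies on by citing West: the paper gives no proof of its own for this lemma, and the standard textbook proof is exactly your route via Euler's formula $n-m+f=2$, the edge--face incidence count $\sum_F \ell(F)=2m$ with $\ell(F)\geq 3$, reduction to the connected case, and separate handling of $n=2$ (where $K_2$'s face of boundary length $2$ forces the weaker constant $3n-5$). Nothing further is needed.
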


\begin{lemma}\cite{West}\label{LemmaOuterplanar} An outerplanar graph $G$ of order $n$ ($n\geq 2$) has at most $2n-3$ edges.
\end{lemma}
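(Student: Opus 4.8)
The plan is to reduce the outerplanar bound to the planar bound of Lemma~\ref{LemmaPlanar} by a standard apex (cone) construction. Fix an outerplanar embedding of $G$, so that every vertex lies on the boundary of the outer (unbounded) face. First I would introduce a new vertex $v$ placed in the outer face and join it by an edge to each of the $n$ vertices of $G$. Because all vertices of $G$ already lie on the outer face, these $n$ new edges can be routed through the outer region without crossing one another or any edge of $G$; hence the resulting graph $G'$ is planar.

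Next I would count edges. The graph $G'$ has $n+1$ vertices and $|E(G)|+n$ edges. Since $n\geq 2$, we have $n+1\geq 3$, so the second clause of Lemma~\ref{LemmaPlanar} applies to $G'$ and gives $|E(G')|\leq 3(n+1)-6=3n-3$. Substituting $|E(G')|=|E(G)|+n$ yields $|E(G)|+n\leq 3n-3$, that is $|E(G)|\leq 2n-3$, which is exactly the claim. (The boundary case $n=2$ is consistent, as $G'$ then has three vertices and at most $3$ edges, forcing $|E(G)|\leq 1$.)

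The only delicate point—and the main obstacle—is the planarity of $G'$, i.e.\ verifying that the apex vertex can indeed be joined to all $n$ vertices without introducing crossings. This is where the defining property of outerplanarity (the existence of an embedding in which every vertex is incident to a single face) is essential; for a general planar graph the analogous construction fails. A little extra care is needed when $G$ is disconnected or not $2$-connected, since then the outer boundary is a closed walk rather than a simple cycle, but one can still route the $n$ spokes to the distinct vertices in the cyclic order in which they appear along this boundary, so planarity is preserved.

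Alternatively, one could argue directly that every maximal outerplanar graph on $n$ vertices is a triangulation of a convex $n$-gon, consisting of the $n$ boundary edges together with $n-3$ non-crossing diagonals, for a total of $2n-3$ edges; since any outerplanar $G$ is a spanning subgraph of such a maximal graph, the bound follows. I would nonetheless prefer the apex construction, as it invokes Lemma~\ref{LemmaPlanar} directly and keeps the argument short and self-contained.
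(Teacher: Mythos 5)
Your argument is correct, but note that there is no in-paper proof to compare it with: the paper states this lemma as a known fact cited from \cite{West} and proves nothing. Your apex (cone) reduction to Lemma~\ref{LemmaPlanar} is one of the two standard proofs, and your arithmetic is right: $G'$ is simple with $n+1\geq 3$ vertices, so $|E(G)|+n=|E(G')|\leq 3(n+1)-6=3n-3$, giving $|E(G)|\leq 2n-3$, with the $n=2$ boundary case checked. The only thin spot is exactly the one you flag: for a disconnected graph, or one with cut vertices, ``the cyclic order along the boundary'' is not quite well defined, since the outer face boundary is then a union of closed walks in which vertices may repeat, so the spoke-routing needs more care than a one-line appeal to cyclic order. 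The cleanest repair merges your two routes: first augment $G$ to a maximal outerplanar graph $\widehat{G}$ on the same vertex set (always possible; for $n\geq 3$ its outer face is bounded by a spanning cycle), then place the apex inside the outer face of $\widehat{G}$ and draw the $n$ spokes as radii to the bounding cycle, where planarity is immediate; since $|E(G)|\leq|E(\widehat{G})|$, the bound transfers. (Your alternative route in fact finishes even faster on its own: a maximal outerplanar graph is a triangulated $n$-gon with $n$ boundary edges and $n-3$ diagonals, totalling $2n-3$; and a third standard option is induction on $n$ using the fact that every outerplanar graph has a vertex of degree at most $2$.) As written, your proposal is essentially complete, with the gap confined to the routing detail you yourself identified.
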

	
We also need the following simple lemma.

\begin{lemma}\label{Lemma} If a graph $G$ has an interval $W(G)$-coloring $\alpha$ and the number of colors that are used only once in the coloring $\alpha$ is $k$, then $$W(G) \leq \frac{|E(G)| + k}{2}.$$
\end{lemma}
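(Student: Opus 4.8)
The plan is to prove the bound by a direct double-counting of the edges, organized by color class. Write $W = W(G)$ and recall that, by the definition of an interval $W$-coloring, every one of the colors $1,\ldots,W$ is actually used by $\alpha$, so each color class is nonempty. First I would partition the color set $\{1,\ldots,W\}$ into two parts: the set $A$ of colors used exactly once, which by hypothesis satisfies $|A| = k$, and the set $B$ of colors used at least twice, which then satisfies $|B| = W - k$.

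Next I would count $|E(G)|$ by summing the sizes of the color classes, using that the proper edge-coloring $\alpha$ assigns each edge exactly one color, so these classes partition $E(G)$. The colors in $A$ account for exactly $k$ edges, while each color in $B$ accounts for at least two edges and hence contributes at least $2(W-k)$ edges in total. Combining these gives
$$|E(G)| \;\geq\; k + 2(W - k) \;=\; 2W - k.$$
Rearranging yields $2W \leq |E(G)| + k$, that is $W \leq \frac{|E(G)| + k}{2}$, which is the claimed inequality.

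There is essentially no obstacle here: the statement is an elementary counting estimate and uses no structural property of $G$ beyond the facts that all $W$ colors appear and that exactly $k$ of them appear only once. The only point to keep straight is the direction of the inequality — we are \emph{lower}-bounding $|E(G)|$ by the minimum possible contribution of each color class (one edge for each color in $A$, two for each color in $B$), and it is precisely this lower bound on the edge count that translates into the desired \emph{upper} bound on $W$.
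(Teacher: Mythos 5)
Your proof is correct and follows essentially the same argument as the paper: partition the colors into the $k$ used once and the $W(G)-k$ used at least twice, deduce $|E(G)| \geq k + 2(W(G)-k)$, and rearrange. Your write-up simply spells out the counting in slightly more detail than the paper's one-line version.
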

\begin{proof}

Since an interval $W(G)$-coloring $\alpha$ of $G$ has only $k$ colors that are used only once and the other $W(G) - k$ colors are used at least twice, we obtain
$$
|E(G)| \geq k + 2(W(G) - k), \text{~thus}
$$
$$
W(G) \leq \frac{|E(G)| + k}{2}.
$$
\end{proof}

We now able to prove our first main result. 

\begin{theorem}\label{Planar} If $G$ is a planar graph with at least two vertices and $G\in\mathfrak{N}$, then 
$$W(G) \leq \frac{3|V(G)|-4}{2}.$$
\end{theorem}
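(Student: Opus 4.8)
The plan is to prove, by induction on $n=|V(G)|$, the slightly restated claim: \emph{every} planar graph $G$ on $n\ge 2$ vertices carrying an interval coloring $\alpha$ with $p$ colors satisfies $2p\le 3n-4$; maximizing over colorings then gives the theorem. By Lemma~\ref{Lemma} we have $2p\le |E(G)|+k$, where $k$ is the number of colors used exactly once, and Lemma~\ref{LemmaPlanar} gives $|E(G)|\le 3n-6$ for $n\ge 3$ (the case $n=2$, i.e. $K_2$ with $p=1$, is immediate). Thus the whole game is to control $k$: I would set up a reduction that peels off all but at most two single colors, so that an ``irreducible'' graph has $k\le 2$ and hence $2p\le (3n-6)+2=3n-4$.

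The structural tool is a cut decomposition attached to a color $c$. Writing $[a_v,b_v]$ for the color interval at $v$, put $L=\{v:b_v<c\}$, $R=\{v:a_v>c\}$, and $C=\{v:a_v\le c\le b_v\}$. A one-line color inequality shows there is no edge between $L$ and $R$, while $C$ is exactly the set of endpoints of the $c$-colored edges, so $|C|=2m_c$ with $m_c$ the number of edges colored $c$. In particular, if $c$ is a single color then $C=\{u,v\}$ is the pair of endpoints of its unique edge $e_c=uv$, and $\{u,v\}$ separates $L$ from $R$. Call such a $c$ \emph{interior} if both $L$ and $R$ are nonempty.

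If an interior single color $c$ exists, I would split $G$ along $e_c=uv$: let $G_1=G[L\cup\{u,v\}]$ inherit the colors $\le c$ and $G_2=G[R\cup\{u,v\}]$ inherit the colors $\ge c$ (shifted down). Since no edge of color $\le c$ meets $R$ and none of color $\ge c$ meets $L$, every color in each range still occurs, and restricting each vertex interval to the range keeps it an interval; so both pieces are genuine interval colorings, with $p_1$ and $p_2$ colors. Here $n_1+n_2=n+2$, both $n_1,n_2\ge 3$ (interiority), and $p=p_1+p_2-1$. The inductive hypothesis $2p_i\le 3n_i-4$ then sums to $2p=2(p_1+p_2)-2\le 3(n_1+n_2)-8-2=3(n+2)-10=3n-4$, closing the step.

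In the base case no single color is interior, so each single color $c$ has $L=\emptyset$ (i.e. $c\le\min_v b_v$) or $R=\emptyset$ (i.e. $c\ge\max_v a_v$). For $c\le\min_v b_v$ the set of $c$-colored endpoints is $\{v:a_v\le c\}$, nondecreasing in $c$; if two single colors $c_1<c_2$ lay in this range, this set would equal the same pair $\{u,v\}$ for both, forcing two colors onto the one edge $uv$, impossible in a simple graph. The symmetric argument on $\{v:b_v\ge c\}$ handles $R=\emptyset$, so $k\le 2$ and the base case follows. The step I expect to be most delicate is precisely the bookkeeping of this reduction: checking that the two restricted colorings use \emph{all} colors of their ranges and stay proper interval colorings, and that $n_1,n_2\ge 3$ holds exactly when $c$ is interior, so the recursion strictly decreases $n$ and bottoms out correctly at the base case (including $K_2$). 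The cut identity and the ``one single color per end'' count are short by comparison.
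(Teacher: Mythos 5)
Your proof is correct; I checked the points you flagged as delicate and they hold. The induced subgraphs $G[L\cup\{u,v\}]$ and $G[R\cup\{u,v\}]$ carry exactly the edges of colors $\le c$ and $\ge c$ respectively (a vertex of $R$ sees only colors $>c$, a vertex of $L$ only colors $<c$, and simplicity rules out a second $uv$-edge), each restricted coloring is proper, uses every color of its range, and keeps an interval at each vertex (at $u$ the intervals become $[a_u,c]$ and $[c,b_u]$); the counts $n_1+n_2=n+2$, $p=p_1+p_2-1$, and interiority giving $3\le n_i<n$ are all right; and your base case correctly yields $k\le 2$, since for singleton colors $c\le\min_v b_v$ the endpoint set $\{v:a_v\le c\}$ is monotone in $c$ and a repeated pair would force a multiple edge (the case $R=\emptyset$ being symmetric), after which Lemma \ref{Lemma} and Lemma \ref{LemmaPlanar} close the argument. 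The only unstated triviality is isolated vertices, which lie in none of $L$, $R$, $C$; discard them first, as they only slacken the bound (the paper is equally silent on this).

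Structurally you rely on the same insight as the paper — the unique edge of a once-used color is a $2$-cut separating the low colors from the high ones, which is the paper's claim $V(G'_i)\cap V(G'_{i+1})=\{u_i,v_i\}$ — but your packaging is genuinely different. The paper fixes all $k$ singleton colors at once, splits the coloring into $k+1$ bands $G'_1,\ldots,G'_{k+1}$, bounds each band by Lemma \ref{LemmaPlanar} (using the $3n-5$ bound for the two extreme bands, which may have only two vertices), and finishes with a single telescoping computation via Lemma \ref{Lemma}; you instead peel off one interior cut at a time by induction, and your monotonicity argument in the irreducible case ($k\le 2$) substitutes for the paper's special treatment of the end bands. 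If one unrolls your recursion, its leaves correspond to the paper's bands, so the two arguments are dual presentations of the same decomposition. Your version buys a more local verification and the slightly stronger statement that \emph{every} interval $p$-coloring of a planar graph on $n\ge 2$ vertices satisfies $2p\le 3n-4$ (the paper proves this too, in effect, since its $\alpha$ is arbitrary, but states it only for $W(G)$); the paper's one-shot decomposition buys a shorter, purely arithmetic finish with no induction and no base-case lemma.
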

\begin{proof}
Let $\alpha$ be an interval $W(G)$-coloring of $G$ and $k$ be the number of colors that are used only once in the coloring $\alpha$. If $k \leq 1$, then, by Lemmas \ref{LemmaPlanar} and \ref{Lemma}, we have 
$$W(G) \leq \frac{|E(G)| + k}{2} \leq \frac{3|V(G)|-5+1}{2} = \frac{3|V(G)|-4}{2}.$$

So we may assume that $k\geq 2$.

Let us denote by $e_1,e_2,\ldots,e_k$ the edges of $G$ whose colors are used only once in the coloring $\alpha$. Let $e_i= u_iv_i$ and $c_i = \alpha(e_i)$ ($1 \leq i \leq k$). Without loss of generality we may assume that $c_1 < c_2 < \cdots < c_k$. Let us also define $c_0 = 1$ and $c_{k+1}=W(G)$. Clearly, $c_0\leq c_1$ and $c_k\leq c_{k+1}$.

For each $i$ ($1 \leq i \leq k+1$), let $C_i=\{e|~e\in E(G)\text{~and~}\alpha(e)\leq c_{i}\}$ and $G_i$ be the subgraph of the graph $G$ containing all the edges whose colors are smaller than or equal to $c_i$ ($C_i$) and their incident vertices. Also, for each $i$ ($1 \leq i \leq k+1$), let $C'_i=\{e|~e\in E(G)\text{~and~}c_{i-1}\leq \alpha(e)\leq c_{i}\}$ and $G'_i$ be the subgraph of the graph $G$ containing all the edges whose colors are larger than or equal to $c_{i-1}$ and smaller than or equal to $c_i$ ($C'_i$) and their incident vertices. Clearly, $G_1 = G'_1$ and $G = G_{k+1}$. Let us also note that since $G$ is a planar graph, for each $i$ ($1 \leq i \leq k+1$), subgraphs $G_i$ and $G'_i$ of $G$ are planar too.   

Let us prove that $V(G'_i) \cap V(G'_{i+1})=\{u_i, v_i\}$ for each $i$ ($1 \leq i \leq k$).

Clearly, by the definitions of $G'_i$ and $G'_{i+1}$, we have $\{u_i, v_i\}\subseteq V(G'_i)\cap V(G'_{i+1})$. Let us now assume that $V(G'_i) \cap V(G'_{i+1})$ contains the vertex $w$ that is neither $u_i$ nor $v_i$.
Since $w \in V(G'_i)$, by the definition of $G'_i$, there exists an edge $e'\in E(G)$ which is incident to $w$ such that $\alpha(e') \leq c_i$. Similarly, since $w \in V(G'_{i+1})$, by the definition of $G'_{i+1}$, there exists an edge $e''\in E(G)$ which is incident to $w$ such that $\alpha(e'') \geq c_i$. Since $\alpha$ is an interval coloring, there exists an edge in $G$ with color $c_i$ which is incident to the vertex $w$. As we mentioned before the color $c_i$ occurs in the coloring $\alpha$ only once and that color has only edge $e_i$ which is not incident to $w$, thus $w \notin V(G'_i) \cap V(G'_{i+1})$. Hence, $V(G'_i) \cap V(G'_{i+1})=\{u_i, v_i\}$ for each $i$ ($1 \leq i \leq k$). This implies that $C'_i \cap C'_{i+1} = \{e_i\}$ for each $i$ ($1 \leq i \leq k$).

Since for each $i$ ($2 \leq i \leq k + 1$), $V(G'_{i-1}) \cap V(G'_{i})=\{u_{i-1}, v_{i-1}\}$ and $C'_{i-1} \cap C'_{i} = \{e_{i-1}\}$, we obtain that the following two properties hold:
\begin{description}
    \item[(a)] for each $i$ ($2 \leq i \leq k + 1$), $|V(G'_i)| = |V(G_{i}) \setminus V(G_{i-1})| + 2$;
    \item[(b)] for each $i$ ($2 \leq i \leq k+1$), $C'_i = \left(C_{i} \setminus C_{i-1}\right) \cup \{e_{i-1}\}$ and hence, $|C_i \setminus C_{i-1}| = |C'_i| - 1$.    
\end{description}    

On the other hand, since for each $i$ ($1 \leq i \leq k+1$), subgraphs $G_i$ and $G'_i$ of $G$ are planar, by Lemma \ref{LemmaPlanar}, we obtain that the following two properties hold:
\begin{description}
    \item[(c)] $|C'_1| \leq 3|V(G'_1)| - 5$ and $|C'_{k+1}|\leq 3|V(G'_{k+1})| - 5$, because these planar subgraphs may contain only two vertices;
    \item[(d)] for each $i$ ($2 \leq i \leq k$), $|C'_i| \leq 3|V(G'_i)| - 6$, because  $|V(G'_i)| \geq 3$ ($G'_i$ contains at least two edges $e_{i-1}$ and $e_{i}$). 
\end{description}    
 
Now, using aforementioned properties (a)-(d) and taking into account Lemma \ref{Lemma}, we have

\begin{gather*}
    W(G) = W(G_{k+1}) \leq \frac{|C_{k+1}| + k}{2} = \frac{|C_1| + \sum_{i=2}^{k+1}{|C_{i} \setminus C_{i-1}|}+k}{2} = \\
    = \frac{|C'_1|   + \sum_{i=2}^{k+1}{(|C'_{i}|-1)}+k}{2} 
    = \frac{|C'_1| + |C'_{k+1}| + \sum_{i=2}^{k}{|C'_{i}|}-k+k}{2}\leq \\
    \leq \frac{3|V(G'_1)| - 5 + 3|V(G'_{k+1})| - 5 + \sum_{i=2}^{k}{(3|V(G'_{i})| - 6)}}{2} =\\
    =\frac{3|V(G'_1)| - 5 + 1 + \sum_{i=2}^{k+1}{(3|V(G'_{i})| - 6)}}{2} = \\
    = \frac{3|V(G_1)| - 4 + \sum_{i=2}^{k+1}{(3(|V(G_{i}) 
\setminus V(G_{i-1})| + 2) - 6)}}{2} = \\
    = \frac{3|V(G_1)| - 4 + 3\sum_{i=2}^{k+1}{|V(G_{i}) 
\setminus V(G_{i-1})|}}{2} = \frac{3|V(G_{k+1})| - 4}{2} = \frac{3|V(G)| - 4}{2}.
\end{gather*}
\end{proof}

In \cite{Axen}, Axenovich constructed a family of interval colorable planar graphs $G_{n}$ of order $n$ and $W(G_{n})=\frac{3n-4}{2}$. Such an example of the interval colorable planar graph $G_{8}$ is shown in Fig. \ref{fig}.

\begin{figure}[t]
\begin{center}
\includegraphics[width=11cm]{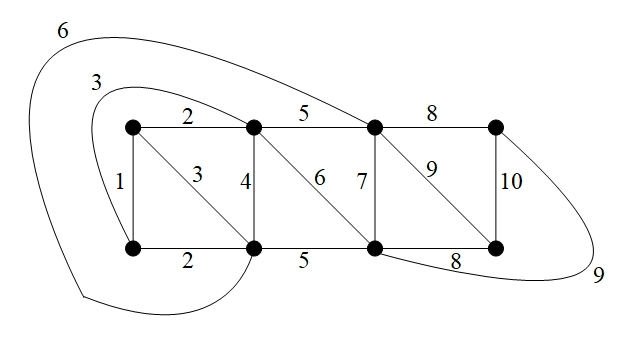}
\caption{A planar graph $G_{8}$ on $8$ vertices with an interval $10$-coloring.}\label{fig}
\end{center}
\end{figure}

Our next result concerns interval edge-colorings of outerplanar graphs. 

\begin{theorem} If $G$ is an outerplanar graph with at least two vertices and $G\in\mathfrak{N}$, then 
$$W(G) \leq |V(G)|-1.$$
\end{theorem}
\begin{proof}
As in the proof of Theorem \ref{Planar}, let $\alpha$ be an interval $W(G)$-coloring of $G$ and $k$ be the number of colors that are used only once in the coloring $\alpha$. If $k \leq 1$, then, by Lemmas \ref{LemmaOuterplanar} and \ref{Lemma}, we have 
$$W(G) \leq \frac{|E(G)| + k}{2} \leq \frac{2|V(G)|-3+1}{2} = |V(G)|-1.$$

So we may assume that $k\geq 2$.

Let us denote by $e_1,e_2,\ldots,e_k$ the edges of $G$ whose colors are used only once in the coloring $\alpha$. Let $e_i= u_iv_i$ and $c_i = \alpha(e_i)$ ($1 \leq i \leq k$). Without loss of generality we may assume that $c_1 < c_2 < \cdots < c_k$. Let us also define $c_0 = 1$ and $c_{k+1}=W(G)$. Clearly, $c_0\leq c_1$ and $c_k\leq c_{k+1}$.
 
For each $i$ ($1 \leq i \leq k+1$), let $C_i=\{e|~e\in E(G)\text{~and~}\alpha(e)\leq c_{i}\}$ and $G_i$ be the subgraph of the graph $G$ containing all the edges whose colors are smaller than or equal to $c_i$ ($C_i$) and their incident vertices. Also, for each $i$ ($1 \leq i \leq k+1$), let $C'_i=\{e|~e\in E(G)\text{~and~}c_{i-1}\leq \alpha(e)\leq c_{i}\}$ and $G'_i$ be the subgraph of the graph $G$ containing all the edges whose colors are larger than or equal to $c_{i-1}$ and smaller than or equal to $c_i$ ($C'_i$) and their incident vertices. Clearly, $G_1 = G'_1$ and $G = G_{k+1}$. Let us also note that since $G$ is an outerplanar graph, for each $i$ ($1 \leq i \leq k+1$), subgraphs $G_i$ and $G'_i$ of $G$ are outerplanar too.

Similarly, as in the proof of Theorem \ref{Planar}, it can be shown that $V(G'_i) \cap V(G'_{i+1})=\{u_i, v_i\}$ and $C'_i \cap C'_{i+1} = \{e_i\}$ for each $i$ ($1 \leq i \leq k$).

Since for each $i$ ($2 \leq i \leq k + 1$), $V(G'_{i-1}) \cap V(G'_{i})=\{u_{i-1}, v_{i-1}\}$ and $C'_{i-1} \cap C'_{i} = \{e_{i-1}\}$, we obtain that the following two properties hold:
\begin{description}
    \item[(1)] for each $i$ ($2 \leq i \leq k+1$), $|V(G'_i)| = |V(G_{i}) \setminus V(G_{i-1})| + 2$;
    \item[(2)] for each $i$ ($2 \leq i \leq k+1$), $C'_i = \left(C_{i} \setminus C_{i-1}\right) \cup \{e_{i-1}\}$ and hence, $|C_i \setminus C_{i-1}| = |C'_i| - 1$.    
\end{description}    

On the other hand, since for each $i$ ($1 \leq i \leq k+1$), subgraphs $G_i$ and $G'_i$ of $G$ are outerplanar, by Lemma \ref{LemmaOuterplanar}, we obtain that the following property holds:
\begin{description}
    \item[(3)] for each $i$ ($1 \leq i \leq k+1$), $|C'_i| \leq 2|V(G'_i)| - 3$, because $|V(G'_i)|\geq 2$.
 \end{description}   

Since an interval $W(G)$-coloring $\alpha$ of $G$ has only $k$ colors that are used only once and the other $W(G) - k$ colors are used at least twice, for each $i$ ($1 \leq i \leq k+1$), let us estimate the maximum number of distinct colors that are used in $G'_i$ except for unique colors $c_1,c_2,\ldots,c_k$:

\begin{description}
    \item[(4)] $G'_1$ contains only one edge which is colored with one of the $k$ unique colors, thus the maximum number of distinct colors that are used in $G'_1$ except for unique colors $c_1,c_2,\ldots,c_k$ is at most $\left\lfloor\frac{|C'_1| - 1}{2}\right\rfloor$;
    \item[(5)] for each $i$ ($2 \leq i \leq k$), $G'_i$ contains only two edges which are colored with two of the $k$ unique colors, thus the maximum number of colors that are used in $G'_i$ except for unique colors $c_1,c_2,\ldots,c_k$ is at most $\left\lfloor\frac{|C'_i| - 2}{2}\right\rfloor$;
    \item[(6)] $G'_{k+1}$ contains only one edge which is colored with one of the $k$ unique colors, thus the maximum number of colors that are used in $G'_{k + 1}$ except for unique colors $c_1,c_2,\ldots,c_k$ is at most $\left\lfloor\frac{|C'_{k+1}| - 1}{2}\right\rfloor$.
\end{description}
 
Now, using aforementioned properties (1)-(6), we have

\begin{gather*}
    W(G) \leq k + \left\lfloor\frac{|C'_1| - 1}{2}\right\rfloor + \left\lfloor\frac{|C'_{k+1}| - 1}{2}\right\rfloor + \sum_{i=2}^{k}{\left\lfloor\frac{|C'_{i}|-2}{2}\right\rfloor} \leq \\
    \leq k + \left\lfloor\frac{2|V(G'_1)| - 3 - 1}{2}\right\rfloor + \left\lfloor\frac{2|V(G'_{k+1})| - 3 - 1}{2}\right\rfloor + \sum_{i=2}^{k}{\left\lfloor\frac{2|V(G'_{i})|- 3 - 2}{2}\right\rfloor} = \\
    = k + |V(G'_1)| - 2 + |V(G'_{k+1})| - 2 + \sum_{i=2}^{k}{(|V(G'_{i})|-3)} =\\
    = k + |V(G'_1)| - 2 + 1 + \sum_{i=2}^{k+1}{(|V(G'_{i})|-3)} = \\
    = k + |V(G_1)| - 1 + \sum_{i=2}^{k+1}{(|V(G_{i})\setminus V(G_{i-1})| + 2 -3)} =\\
    =k + |V(G_1)| - 1 + \sum_{i=2}^{k+1}{(|V(G_{i})\setminus V(G_{i-1})|)}- k 
    = \\
    =|V(G_1)| + \sum_{i=2}^{k+1}{(|V(G_{i}) 
\setminus V(G_{i-1})|)} - 1 =|V(G_{k + 1})| - 1 = |V(G)| - 1.
\end{gather*}
\end{proof}		

Since all trees are outerplanar and interval colorable \cite{Kampreprint},
it is easy to see that for caterpillar trees $T_{n}$ on $n$ vertices, $W(T_{n})=n-1$ \cite{AsrDenHag}. In \cite{GiaroKubaleMalaf}, it was shown that fan graphs $F_{n}$ on $n$ vertices ($n>3$) have an interval $(n-1)$-coloring.
Since fan graphs $F_{n}$ are outerplanar, for any positive integer $n>3$, $W(F_{n})=n-1$. This shows that all proved upper bounds on $W(G)$ are sharp.

\begin{acknowledgement}
We would like to thank Petros A. Petrosyan for his suggestions and support.
\end{acknowledgement}

\bigskip

\textbf{Note added in proof} During the process of writing this paper we observed that L. Hollom, J. Portier and L. Versteegen submitted a preprint \cite{HollomPortierVersteegen} on ArXiv, where the authors independently proved Axenovich's conjecture. Our proofs are different, and our result on interval edge-colorings of outerplanar graphs is not covered by the result of L. Hollom, J. Portier and L. Versteegen. We also think that a proof technique in Theorem 2.4 can be useful for answering to Questions 5.1-2 \cite{HollomPortierVersteegen}.

\end{document}